\newcommand{\leqnomode}{\tagsleft@true\let\veqno\@@leqno}
\newcommand{\reqnomode}{\tagsleft@false\let\veqno\@@eqno}
\newcounter{algsubstate}
\theoremstyle{plain}
\newtheorem{theorem}{Theorem}
\newtheorem{corollary}[theorem]{Corollary}
\theoremstyle{definition}
\newtheorem{remark}[theorem]{Remark}
\newcommand{\cC}{\mathcal{C}}
\newcommand{\cX}{\mathcal{X}}
\def\final{0}  % set this to 1 to get a comment-free version
\newcommand{\krnote}[1]{{\color{red}[{\tiny Krist\'of: \bf #1}]\marginpar{\color{red}*}}}
\newcommand{\enote}[1]{{\color{blue}[{\tiny Endre: \bf #1}]\marginpar{\color{blue}*}}}
\newcommand{\knote}[1]{{\color{orange}[{\tiny Kaz: \bf #1}]\marginpar{\color{orange}*}}}
\newcommand{\krnote}[1]{}
\newcommand{\enote}[1]{}
\newcommand{\knote}[1]{}
\title{Generating clause sequences of a CNF formula}
\author{Krist{\'o}f B{\'e}rczi\thanks{MTA-ELTE Egerv\'ary Research Group, Department of Operations Research, E{\"o}tv{\"o}s Lor{\'a}nd University, Budapest, Hungary. Email: \texttt{berkri@cs.elte.hu}.} 
\and
Endre Boros\thanks{MSIS Department and RUTCOR, Rutgers University, New Jersey, USA. Email: \texttt{endre.boros@rutgers.edu}.}
\and
Ond\v{r}ej \v{C}epek\thanks{Charles University, Faculty of Mathematics and Physics, Department of Theoretical Computer Science and Mathematical Logic, Praha, Czech Republic. Email: \texttt{cepek@ktiml.mff.cuni.cz}.}
\and
Khaled Elbassioni\thanks{Masdar Institute, Khalifa University of Science and Technology, P.O. Box 54224, Abu Dhabi, UAE. Email:
\texttt{khaled.elbassioni@ku.ac.ae}}
\and
Petr Ku\v{c}era\thanks{Charles University, Faculty of Mathematics and Physics, Department of Theoretical Computer Science and Mathematical Logic, Praha, Czech Republic. Email: \texttt{kucerap@ktiml.mff.cuni.cz}.}
\and
Kazuhisa Makino\thanks{Research Institute for Mathematical Sciences (RIMS) Kyoto University, Kyoto, Japan. Email: \texttt{makino@kurims.kyoto.ac.jp}.}
}
\begin{document}
\date{}
\maketitle

\begin{abstract}
Given a CNF formula $\Phi$ with clauses $C_1,\ldots,C_m$ and variables $V=\{x_1,\ldots,x_n\}$, a truth assignment $\bm{a}:V\rightarrow\{0,1\}$ of $\Phi$ leads
to a clause sequence $\sigma_\Phi(\bm{a})=(C_1(\bm{a}),\ldots,C_m(\bm{a}))\in\{0,1\}^m$ where $C_i(\bm{a}) = 1$ if clause $C_i$ evaluates to $1$ under assignment $\bm{a}$, otherwise $C_i(\bm{a}) = 0$. The set of all possible clause sequences carries a lot of information on the formula, e.g. SAT, MAX-SAT and MIN-SAT can be encoded in terms of finding a clause sequence with extremal properties.

We consider a problem posed at Dagstuhl Seminar 19211 ``Enumeration in Data Management'' (2019) about the generation of all possible clause sequences of a given CNF with bounded dimension. We prove that the problem can be solved in incremental polynomial time. We further give an algorithm with polynomial delay for the class of tractable CNF formulas. We also consider the generation of maximal and minimal clause sequences, and show that generating maximal clause sequences is NP-hard, while minimal clause sequences can be generated with polynomial delay.
  
  \bigskip

  \noindent \textbf{Keywords:} CNF formulas, Clause sequences, Enumeration, Generation
\end{abstract}

\section{Introduction}
\label{sec:intro}

The concept of \emph{well-designed pattern trees} was introduced by Letelier et al. \cite{letelier2013static} as a convenient graphic representation of conjuctive queries extended by the optional operator. The nodes of such a tree correspond to the queries, while the tree itself represents the optional extensions. Well-designed pattern trees have been studied from a complexity point of view in several aspects. One of the most interesting problems in the context of query languages is the \emph{generation problem}, that is, generating the solutions one after the other without repetition. 

\paragraph{Previous work}
The generation problem was studied for First-Order and Conjunctive Queries \cite{bulatov2012enumerating,durand2014enumerating,kazana2013enumeration,
segoufin2013enumerating} and for well-designed pattern trees \cite{letelier2013static}. Recently, Kr\"oll et al. \cite{kroll2016complexity} initiated a systematic study of the complexity of the generation problem of well-designed pattern trees. They identified several tractable and intractable cases of the problem both from a classical and from a parameterized complexity point of view. One class of pattern trees however remained unclassified. For a class $\cC$ of conjunctive queries, a well-designed pattern tree $T$ is \emph{globally in $\cC$} if for every subtree $T'$ of $T$ the corresponding conjunctive query is also in $\cC$. The \emph{treewidth} of a conjunctive query is the treewidth of its Gaifman-graph \cite{GD86}. In \cite{kroll2016complexity}, the complexity of the generation problem for the class of well-designed pattern trees falling globally in the class of queries of treewidth at most $k$ and having $c$-semi-bounded interface was left open (see~\cite[Table 1 on page 16]{kroll2016complexity}).

At the Dagstuhl Seminar 19211 ``Enumeration in Data Management'', Kr\"oll proposed an open problem on the generation of clause sequences of CNF formulas \cite[Problem 4.7]{boros2019enumeration}. The problem is motivated by the fact that it can be reduced to the above mentioned unsolved case of pattern trees, thus any bound on the generation complexity would be helpful in understanding the general problem. A \emph{generation algorithm} outputs the objects in question one by one without repetition. We call it a \emph{polynomial delay} procedure if the computing time between any two consecutive outputs is bounded by a polynomial of the input size. We call it \emph{incrementally polynomial}, if for any $k$ the first $k$ objects can be generated in polynomial time in the input size and $k$. Finally, it is called \emph{total polynomial} if all $N$ objects are generated in polynomial time in the input size and $N$. 

The problem studied in this paper can be formalized as follows. Let $V=\{x_1,\dots,x_n\}$ be a set of $n$ Boolean variables and $\Phi=C_1\wedge\dots\wedge C_m$ be a CNF in these variables with clauses $C_1,\ldots,C_m$. For an assignment $\bm{a}:V\rightarrow\{0,1\}$, the corresponding binary sequence $\sigma_\Phi(\bm{a})=(C_1(\bm{a}),\dots,C_m(\bm{a}))$ is called a \emph{signature}\footnote{We prefer the term \emph{signature} over the term \emph{clause sequence}  proposed by Kr\"ol, since it is a binary string, not a sequence of clauses. Therefore we use the term \emph{signature} in the rest of the paper.} of $\Phi$, that is, $C_i(\bm{a})=1$ if clause $C_i$ evaluates to $1$ under assignment $a$, and $C_i(\bm{a})=0$ otherwise. In particular, this means that $\Phi$ is satisfiable if and only if there exists some assignment $\bm{a}$ with $\sigma_\Phi(\bm{a})=(1,\ldots,1)$. Moreover, MAX-SAT and MIN-SAT can be encoded by asking for the signature with the largest and smallest sum of elements, respectively. 

As an example, consider the CNF formula $\Phi = C_1\wedge C_2\wedge C_3\wedge C_4$, where $C_1 =x_1\vee \bar{x}_3$, $C_2 =\bar{x}_2$, $C_3=x_1\vee x_2\vee x_3$ and $C_4=x_2\vee \bar{x}_3$. 
Then assignment
$\bm{a_1}=\{x_1 \mapsto 1, x_2  \mapsto 1,x_3 \mapsto 1\}$ 
leads to signature 
$\sigma_\Phi(\bm{a_1})=(1,0,1,1)$, 
while assignment 
$\bm{a_2}=\{x_1 \mapsto 0, x_2  \mapsto 0,x_3 \mapsto 1\}$ 
leads to signature 
$\sigma_\Phi(\bm{a_2})=(0,1,1,0)$.
It is easy to see that $\Phi$ has six different signatures. In general, if the number of signatures is $\Omega(2^n)$, then generating them in total polynomial time is not difficult. However, their number may be $o(2^n)$, presenting a potential challenge for generation.

Given a CNF $\Phi=C_1\wedge\dots\wedge C_m$, we denote by $\dim(\Phi)=\max_{i=1,\dots,m} |C_i|$, and call $\Phi$ 
a $d$-CNF if $\dim(\Phi)\leq d$. The \emph{number of clauses} and the \emph{number of literals} appearing in $\Phi$ are denoted by $|\Phi|$ and $\|\Phi\|$, respectively. Vectors are written using bold fonts throughout, e.g. $\bm{x}$. The problem asked in \cite{boros2019enumeration} is for $d$-CNF formulas where $d$ is a fixed positive integer, but we also consider the same problem for general CNFs.

%\noindent \begin{minipage}[t]{0.49\textwidth}
%\begin{tcolorbox}[boxsep=2pt,left=4pt,right=4pt,top=4pt,bottom=4pt]
%\noindent\underline{\textsc{Generation of signatures}}
%
%\noindent\underline{\textsc{of a $d$-CNF}}
%
%\vspace{0.2cm}
%
%\noindent~~\textbf{Input:} A $d$-CNF $\Phi$.
%
%\noindent~~\textbf{Output:} All possible signatures of $\Phi$.\\
%\end{tcolorbox}
%\end{minipage}
%\hfill
\begin{center}
\begin{minipage}[t]{0.48\textwidth}
\begin{tcolorbox}[boxsep=2pt,left=4pt,right=4pt,top=4pt,bottom=4pt]
\noindent\underline{\textsc{Generation of signatures} ($GS(\Phi)$)}

\vspace{0.2cm}

\noindent\textbf{Input:} A CNF $\Phi$.

\noindent\textbf{Output:} All possible signatures of $\Phi$.
\end{tcolorbox}
\end{minipage}
\end{center}

Motivated by MAX-SAT and MIN-SAT, we also consider maximal and minimal signatures. A signature of a CNF $\Phi$ is called \emph{maximal} (resp. \emph{minimal}) if an inclusionwise maximal (resp. minimal) subset of the clauses takes value 1.

\bigskip

\noindent \begin{minipage}[t]{0.48\textwidth}
\begin{tcolorbox}[boxsep=2pt,left=4pt,right=4pt,top=4pt,bottom=4pt]
\noindent\underline{\textsc{Generation of maximal signatures}}

\vspace{0.2cm}

\noindent\textbf{Input:} A CNF $\Phi$.

\noindent\textbf{Output:} All possible maximal signatures of $\Phi$.
\end{tcolorbox}
\end{minipage}
\hfill
\begin{minipage}[t]{0.48\textwidth}
\begin{tcolorbox}[boxsep=2pt,left=4pt,right=4pt,top=4pt,bottom=4pt]
\noindent\underline{\textsc{Generation of minimal signatures}}

\vspace{0.2cm}

\noindent\textbf{Input:} A CNF $\Phi$.

\noindent\textbf{Output:} All possible minimal signatures of $\Phi$.
\end{tcolorbox}
\end{minipage}

\paragraph{Our results}

We show that $GS(\Phi)$ can be solved in incremental polynomial time for formulas with a bounded dimension, thus answering the open problem posed by Kr\"oll, and with polynomial delay for the class of tractable CNF formulas. For the class of formulas with bounded dimension and co-occurrence, we derive a faster incremental polynomial algorithm. We also show that generating maximal signatures is NP-hard, while minimal signatures can be generated with polynomial delay.

\paragraph{Organization} 

Our algorithm with polynomial delay for the class of tractable CNF formulas is given in Section~\ref{sec:tractable}. Section~\ref{sec:bd} discusses CNFs with bounded dimension: an incremental polynomial algorithm is presented in Section~\ref{sec:bdco} for CNFs with bounded dimension and co-occurrence, while our main result answering the question of Kr\"oll is presented in Section~\ref{sec:bdb}. The generation of maximal and minimal clause sequences is considered in Section~\ref{sec:minmax}. Finally, we conclude the paper in Section~\ref{sec:conc}, where a `reversed' variant of the problem is proposed as an open question.

%It is interesting to note that given a 3-CNF $\Phi$ of $m$ clauses and a vector $y=(1,1,...,1)\in \{0,1\}^m$ it is NP-hard to test if $y$ is a signature of $\Phi$, or not (it is if only if $\Phi$ is satisfiable). On the other hand, we prove that generating all signatures of $\Phi$ can be done in incremental polynomial time. 

%\section{Preliminaries}
%\label{sec:prelim}
%
%Throughout the paper, we will use the following concepts.
%
%\paragraph{Conflict graph}
%
%

%\paragraph{Co-occurrence graph}
%
% 
%
%\paragraph{Union generation}
%
%Given a hypergraph $\cH\subseteq 2^W$ on a finite set $W$, we say that $X\subseteq W$ is an \emph{$\cH$-union}, if there exists a subfamily $\cF\subseteq \cH$ such that 
%\[
%X ~=~ \bigcup_{H\in\cF} H. 
%\]
%It is not difficult to see that all different $\cH$-unions can be generated with polynomial delay \cite{}. 
%
%\enote{Add reference.}

%\section{Generating all signatures}
%\label{sec:results}
%
%\subsection{Monotone functions}
%
%\begin{theorem}\label{t-monotone}
%If $\Phi$ is monotone,  then its signatures can be generated with polynomial delay.
%\end{theorem}
%\begin{proof}
%If $\Phi$ is monotone, then all signatures of it are unions of the subsets defined by the literals occurring in $\Phi$. As all unions can be generated with polynomial delay, all signatures of $\Phi$ can be generated with polynomial delay. 
%\end{proof}

\section{Tractable CNFs}
\label{sec:tractable}

Given a CNF $\Phi=\bigwedge_{C\in\cC}C$, a CNF $\Psi=\bigwedge_{C\in\cC'}C$ is called a \emph{sub-CNF} of $\Phi$ if $\cC'\subseteq\cC$, and denoted by $\Psi\subseteq \Phi$.  We call a family of CNFs \emph{tractable} if for any CNF $\Phi$ in this family the satisfiability of any sub-CNF of $\Phi$ can be decided in polynomial time even after fixing any subset of the variables at arbitrary values. For example, the classes of 2-CNFs or Horn CNFs are tractable. 

\begin{theorem}\label{t-tractable}
If $\Phi$ belongs to a tractable family and has $m$ clauses, then its signatures can be generated with a delay of $O(m)$ SAT-calls. 
\end{theorem}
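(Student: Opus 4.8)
The plan is to run a backtracking (``flashlight'') search that builds a signature one coordinate at a time, using the tractability hypothesis as a feasibility oracle at each node of the search tree. Call a prefix $(s_1,\dots,s_k)\in\{0,1\}^k$ \emph{feasible} if there is an assignment $\bm{a}$ with $C_i(\bm{a})=s_i$ for all $i\le k$. The first step is to observe that testing feasibility of a prefix reduces to deciding satisfiability of a sub-CNF of $\Phi$ with some variables fixed. Indeed, setting $s_i=0$ forces clause $C_i$ to be falsified, i.e. every literal of $C_i$ to be $0$; collecting these forced literals over all $i\le k$ with $s_i=0$ yields a partial assignment, and if it is inconsistent then the prefix is infeasible (a cheap check, not a SAT-call). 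Otherwise, after applying this partial assignment, the prefix is feasible if and only if the sub-CNF $\bigwedge_{i\le k,\,s_i=1}C_i$ is satisfiable over the remaining free variables. By the tractability assumption this is decidable in a single SAT-call, since it is exactly a satisfiability test of a sub-CNF after fixing a subset of the variables.

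Next I would perform a depth-first search over feasible prefixes. The root is the empty prefix, a node at depth $k$ is a feasible prefix $(s_1,\dots,s_k)$, and its two potential children extend it by $s_{k+1}\in\{0,1\}$. We descend into a child only after verifying via the oracle above that it is feasible, so the leaves, which sit at depth $m$, are precisely the signatures of $\Phi$; outputting each leaf as it is reached enumerates all signatures exactly once.

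The structural point that keeps the delay small is that this search tree has no dead ends: every feasible prefix extends to a signature. This is immediate, because if $\bm{a}$ witnesses feasibility of $(s_1,\dots,s_k)$ then $(C_1(\bm{a}),\dots,C_m(\bm{a}))$ is a signature with that prefix; in particular at least one of the two children of any feasible internal node is again feasible. Consequently the standard backtracking-enumeration analysis applies: between two consecutive outputs the DFS traverses a path of $O(m)$ tree edges, spending $O(1)$ SAT-calls per node (two child feasibility tests), for a total delay of $O(m)$ SAT-calls.

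The main thing to get right is the reduction in the first step: one must confirm that the $s_i=0$ constraints are handled purely by variable fixing together with a consistency check, so that the only genuine satisfiability query is on the sub-CNF consisting of the $s_i=1$ clauses with variables fixed---precisely the situation the tractability definition covers (``fixing any subset of the variables at arbitrary values''). Once this is verified, the no-dead-end property and the delay bookkeeping are the routine part of the argument.
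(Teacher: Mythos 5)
Your proposal is correct and follows essentially the same route as the paper: a flashlight/backtracking search over prefixes of the signature, where feasibility of a prefix is tested by one SAT-call on the sub-CNF of clauses set to $1$ after fixing to $0$ all literals of clauses set to $0$, and the no-dead-end property gives the $O(m)$ SAT-call delay. Your explicit treatment of the consistency check for the forced partial assignment is a detail the paper leaves implicit, but the argument is the same.
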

\begin{proof}
The idea is to apply the so-called `flashlight' approach in the signature space, using SAT as a `flashlight' \cite{boros2004algorithms}. Let $\Phi=\bigwedge_{i=1}^m C_i$. We are going to build a binary tree in which the paths from the root to the vertices of the tree correspond to binary values of initial segments of the set of clauses, that is, $C_1,\dots,C_k$ for some $1\leq k\leq m$. There exists a signature with this prefix if and only if the CNF formed by the clauses set to value one in this sequence is satisfiable even after all the forced fixing of variables that appear in clauses whose value is zero (note that a clause has value $0$ if and only if all the literals in it are $0$). If such a CNF is not satisfiable, we backtrack and do not explore the subtree rooted at this vertex as there exists no signature with this prefix. If the CNF is satisfiable, we continue building the corresponding subtree which in this is guaranteed to contain at least one signature. The algorithm will not backtrack above this vertex before outputing all (at least one) signatures in this subtree. It is not difficult to verify that after at most $2m$ calls to SAT we can output a new signature not generated before. After outputting the last signature, the procedure terminates after at most $m$ SAT calls.
\end{proof}

\begin{remark} \label{rem:union}
Let us remark that the family of monotone CNFs is tractable, but for this case there is a more efficient polynomial delay generation of the signatures. Indeed, in this case we can view a clause as a subset of the variables. Consequently, the set of zeros in a signature corresponds to a union of clauses. We claim that all such unions can be generated with $O(nm)$ delay, where $m=|\Phi|$ is the number of clauses, implying that all signatures of $\Phi$ can be generated with polynomial delay.

To see this claim, we represent unions as leaves of a binary tree of depth $n$ (nodes correspond to variables), where we construct only the vertices that are on paths to the leaves. Besides the binary tree, we keep the leaves in a last-in-first-out queue\footnote{The size of the queue can be exponential in $n$ as it contains the leaves of the binary tree that is being built.}. Initially, leaves correspond to individual clauses of $\Phi$. Each time before outputting the first union $U$ in the queue, we check for all clauses $C\in\Phi$ if $C\cup U$ is a new union or not by using our binary tree. This takes $O(n)$ time for one clause, and $O(nm)$ time for all the clauses of $\Phi$. Whenever a new union is found, it is added to the tree and the queue as a last element. After this, we output $U$ and remove it from the queue. It is not difficult to verify that this gives us an $O(nm)$ delay generation of all unions. Note that in this case Theorem \ref{t-tractable} guarantees only an $O(\| \Phi\|m)$ delay, because every SAT call requires $O(\| \Phi\|)$ time.
\end{remark}

\section{CNFs with bounded dimension}
\label{sec:bd}

\subsection{Bounded co-occurrence}
\label{sec:bdco}

Given a CNF $\Phi$, we denote by $H_\Phi=(\Phi,E)$ the \emph{conflict graph} of $\Phi$. The vertices of $H_\Phi$ are the clauses of $\Phi$ and edges are exactly the \emph{conflicting} pairs of clauses, i.e., pairs $(C_i,C_j)$ for which there exists a literal $u\in C_i$ such that $\bar{u}\in C_j$. 

Let $S\subseteq \Phi$ be a maximal independent set of $H_\Phi$, and let $L(S)=\bigcup_{C_i\in S} C_i$ denote the set of literals appearing in the clauses of $S$. We define a partial assignment $\bm{a_S}:L(S)\rightarrow\{0,1\}$ by setting all literals of $L(S)$ to zero (and hence the complementary literals are set to $1$). The \emph{signature associated to $S$} is then defined as $\sigma_\Phi(S):=\sigma_\Phi(\bm{a_S})=(y_1,\dots,y_m)\in\{0,1\}^m$. The coordinates of $\sigma_\Phi(S)$ are well-defined as $y_i=0$ if and only if $C_i\in S$ for $i=1,...,m$. We will dismiss the subscript $\Phi$ whenever the CNF in question is clear from the context. Note that for different maximal independent sets $S\neq S'$ of $H_\Phi$ we have $\sigma(S)\neq \sigma(S')$. 
It is worth mentioning that all maximal independent sets of $H_\Phi$ can be generated with polynomial delay \cite{tsukiyama1977new,makino2004new}, which is hence a good start for CNF signature generation. 

Assume that $\Phi$ has bounded dimension, i.e., for a constant $d$ we have $|C_i|\leq d$ for all $i=1,...,m$. Let us define $\cX_j=\{C_i\in \Phi\mid x_j\in C_i \text{ or } \bar{x}_j\in C_i\}$. We say that $\Phi$ is of \emph{$\omega$-bounded co-occurrence} if $|\cX_j|\leq \omega$ for $j=1,...,n$ and $\omega$ is a fixed constant. 

\begin{theorem} \label{thm:bdco}
If $\Phi$ has bounded dimension and co-occurrence, then its signatures can be generated in incremental polynomial time. 
\end{theorem}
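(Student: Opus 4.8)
The plan is to first recast signature generation as the generation of the \emph{realizable} independent sets of the conflict graph $H_\Phi$. As observed before the statement, the set of clauses evaluating to $0$ under any assignment is an independent set of $H_\Phi$, and the signature is determined by this zero-set; so it suffices to generate those independent sets $S$ for which some assignment makes exactly the clauses in $S$ false. Setting $L(S)\to 0$ forces the clauses of $S$ to $0$ and, since $S$ is independent, is consistent; every clause adjacent to $S$ in $H_\Phi$ then contains a literal forced to $1$ and is automatically satisfied. Hence $S$ is realizable if and only if the residual CNF $R(S)$, obtained from the clauses outside $N[S]$ by deleting the literals lying in $L(S)$, is satisfiable. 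Bounded dimension and co-occurrence enter here through sparsity: since each clause has at most $d$ variables and each variable lies in at most $\omega$ clauses, $H_\Phi$ has maximum degree at most $d(\omega-1)$, and, more importantly, the region of clauses and variables interacting with any bounded set of clauses has bounded size.

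Next I would generate the realizable sets by a traversal that carries witnessing assignments. Starting from any assignment $\bm{a}$ and its signature, I would explore moves given by single-variable flips: from a pair $(S,\bm{a})$ and a variable $x_j$, the flipped assignment $\bm{a}'=\bm{a}\oplus e_j$ has a signature differing from that of $\bm{a}$ only in the at most $\omega$ clauses of $\cX_j$, so the new zero-set together with a witness for it are read off in $O(\omega d)$ time. Each newly seen zero-set is stored (say in a trie over $\{0,1\}^m$) and queued. Because flips connect the Boolean cube and $\sigma_\Phi$ is surjective onto the signatures, the induced moves connect the signatures, so in principle every realizable set is reachable.

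The difficulty—and the reason the bound is incremental polynomial rather than polynomial delay—is that keeping a single witness per signature need not suffice: a signature's fibre may contain several assignments whose flips lead to different neighbours, so a naive one-witness search can stall. Deciding directly whether a prefix of clause values extends to a signature is exactly the satisfiability of a bounded-dimension, bounded-co-occurrence sub-CNF, which is NP-hard, so we cannot afford an exact extension oracle. The way I would use the sparsity is to replace the global test by a \emph{local} one: when a move is blocked, attempt to repair the witness by re-assigning only the variables in the bounded region around the affected clauses, brute-forcing over those $O(d\omega)$-many variables in constant time while leaving $\bm{a}$ untouched elsewhere. This produces alternative witnesses cheaply and lets the search continue; the genuinely global satisfiability reconciliations are then charged against the signatures already output, which is what yields incremental—rather than delay—guarantees.

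I expect the main obstacle to be completeness: proving that this local-repair traversal actually reaches \emph{every} realizable zero-set, i.e. that the graph on signatures whose edges are the locally realizable moves is connected. The intended lever is bounded co-occurrence, which confines the effect of any flip and of any repair to a region of bounded size and should thereby allow one to convert an arbitrary Boolean-cube path between two witnesses into a sequence of locally repairable moves at the signature level. Turning this intuition into a proof that no realizable set is missed, while keeping the per-step work polynomial and the total work polynomial in the input and in the number of signatures, is the crux of the argument.
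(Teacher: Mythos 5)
Your reduction of the problem to generating the realizable zero-sets of the conflict graph $H_\Phi$ is fine, but the algorithm you build on top of it has a genuine gap, and you name it yourself: you never establish that the flip-plus-local-repair traversal reaches every realizable zero-set. This is not a deferred detail but the entire content of the theorem. A single witnessing assignment per signature can stall (as you note, different assignments in the same fibre have different flip-neighbourhoods), and your proposed fix --- re-assigning only the $O(d\omega)$ variables in the region around the affected clauses --- gives no reason why the graph of ``locally repairable moves'' on signatures is connected. Bounded co-occurrence confines the effect of one flip, but a path in the Boolean cube between two witnesses of different signatures can pass through assignments whose signatures are themselves hard to certify or repair locally; nothing in the sparsity hypothesis converts such a path into a sequence of locally realizable signature moves. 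The closing remark that global reconciliations can be ``charged against the signatures already output'' is an aspiration, not an argument: without a completeness proof there is nothing to charge against, and the algorithm may simply terminate early, missing signatures.

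The paper's proof avoids local search entirely and is worth contrasting with your plan, because it shows where bounded dimension and co-occurrence are actually used. One greedily builds a maximal \emph{induced} matching $M$ of size $\mu$ in $H_\Phi$; this immediately certifies at least $k\geq 2^{\mu}$ signatures (via maximal independent sets of $H_\Phi$, generable with polynomial delay). Maximality of the induced matching forces the set $U$ of clauses not adjacent to $M$ to be independent, and sparsity bounds the number of remaining clauses by $|W|\leq 2\mu d\omega$ and hence the number of variables they involve by $n'\leq 2\mu d^{2}\omega$. Since $2^{n'}\leq k^{2d^{2}\omega}$, one can afford to \emph{brute-force} all assignments to these $n'$ core variables in time polynomial in the input and in $k$; the leftover monotone variables confined to $U$ are then handled by the union-generation argument of Remark~\ref{rem:union}. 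In short, the paper turns sparsity into a counting lower bound on the output size that licenses exhaustive search over a bounded core, which sidesteps precisely the connectivity question your approach founders on. If you want to salvage a traversal-based proof, you would need an analogous quantitative lever; as written, the proposal does not prove the statement.
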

\begin{proof}
Let us construct greedily a maximal induced matching $M\subseteq E$ in $H_\Phi$. Note that $H_\Phi$ has at least $2^{|M|}$ maximal independent sets (and hence at least this many signatures can be generated with polynomial delay, as explained above). We denote by $W\subseteq \Phi$ the set of clauses that have edges in $H_\Phi$ connecting them to some of the clauses covered by $M$, and set $U=\Phi\setminus W$. Note that $U$ is an independent set in $H_\Phi$. 

Assume that $\mu=|M|$, $|C_i|\leq d$ for all $i=1,...,m$, and $|\cX_j|\leq \omega$ for all $j=1,...,n$. According to our assumptions, $d$ and $\omega$ are fixed constants.  Observe that with these notations we have $|W| \leq 2\mu d \omega$. We denote by $n'$ the number of variables involved in clauses of $W$. Note that we have $n'\leq d|W|$.

% We distinguish two cases.\\

%\noindent \textbf{Case 1}: $n'\leq n\leq 2\mu d^2 \omega $  

%\noindent In this case 
%Let us first generate with polynomial delay the maximal independent sets $S_\ell$, $\ell=1,...,k$ of $H_\Phi$, and the corresponding signatures $\sigma(S_\ell)$, $\ell=1,...,k$. If $n'\leq n\leq 2\mu d^2 \omega $, then we have $k\geq 2^\mu \geq \left( 2^n\right)^{1/2 d^2\omega}$, and thus we can try all binary assignments to the variables in time $O(mnk^{2d^2\omega})$.
%\noindent \textbf{Case 2}: $n'\leq 2\mu d^2 \omega < n$ 
%\noindent In this case 
%Otherwise there are some variables that are monotone in $\Phi$ and appear only in clauses of $U$. Let $L'$ be the set of these monotone literals of $\Phi$ (some variables appear only positively while some others appear only negatively). 
We denote by $L'$ the (possibly empty) set of variables that are monotone in $\Phi$ and appear only in clauses of $U$ (some variables appear only positively while some others appear only negatively). Let us first set all literals in $L'$ to $0$, and consider the resulting CNF $\Phi'$  in $n'$ variables. We generate with polynomial delay the maximal independent sets $S_\ell$, $\ell=1,...,k$ of $H_{\Phi'}$, and the corresponding signatures $\sigma(S_\ell)$, $\ell=1,...,k$. Now we have $k\geq 2^\mu \geq ( 2^{n'})^{1/2 d^2\omega}$, and thus we can try all binary assignments to the $n'$ variables in $O(mnk^{2d^2\omega})$ time, and see if we get some more signatures. 

Assume we get $k'\geq k$ distinct signatures. By switching the literals in $L'$, we may get new signatures, resulting from changing some of the zeros in a signature to one. For any partial assignment to the $n'$ variables, this is a set-union generation problem that can be solved with polynomial delay, see Remark~\ref{rem:union}. We may get in this way the same signature multiple times, but no more than $k'$ times, and thus at this stage the additional signatures are also generated in incremental polynomial time. 
\end{proof}

\subsection{Unbounded co-occurrence}
\label{sec:bdb}

In the previous section, we considered CNFs with bounded dimension and co-occurrence. The running time of the algorithm provided by Theorem~\ref{thm:bdco} depends exponentially on $\omega$, hence it is not suitable for handling the general case. In the present section, a more general procedure is given based on a different approach. 

For a CNF $\Phi$, we denote by $G_\Phi=(\Phi,E)$ the so called \emph{dual graph} of $\Phi$ \cite{SAMER201050}. The vertices of $G_\Phi$ are the clauses of $\Phi$ and edges are exactly the pairs of clauses $(C_i,C_j)$ for which there exists a variable that occurs in both $C_i$ and $C_j$ (complemented or not). If $S\subseteq \Phi$ is an independent set of $G_\Phi$, then the clauses of $S$ have pairwise disjoint sets of variables involved.

\begin{theorem}\label{t-boundeddim}
There exists an algorithm $\mathfrak{A}$ that generates the signatures of a CNF $\Phi$ consisting of $m$ clauses in $n$ binary variables in $O(dm^2nk^{\binom{d}{2}})$ total time, where $d=\dim(\Phi)$ and $k$ is the number of signatures. 
\end{theorem}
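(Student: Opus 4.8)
The plan is to induct on the dimension $d$, with the dual graph $G_\Phi$ as the main device. The engine is a lower bound on the number $k$ of signatures: if $I\subseteq\Phi$ is an independent set of $G_\Phi$, then its clauses have pairwise disjoint variable sets, so each of them can be driven to $0$ (falsify all its literals) or to $1$ (satisfy one literal) independently of the others. Fixing all other variables arbitrarily and ranging over the $2^{|I|}$ patterns on the coordinates indexed by $I$ yields $2^{|I|}$ pairwise distinct signatures, so $k\ge 2^{|I|}$. The point of this inequality is that a brute-force search over roughly $c\,|I|$ variables costs $2^{c|I|}\le k^{c}$, i.e. it is polynomial in the output size; the $|I|$ bits of ``budget'' provided by $I$ are exactly what we will spend. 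First I would compute greedily, in time $O(dm^{2})$, a maximal independent set $I$ of $G_\Phi$.

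The reduction step converts a $d$-CNF into a collection of $(d-1)$-CNFs. Let $V_I$ be the set of variables occurring in the clauses of $I$. Because $I$ is maximal, every clause $C\notin I$ shares a variable with some clause of $I$, so $C$ meets $V_I$. Variables of $V_I$ that occur in no clause outside $I$ are private to a single clause of $I$; together with the purely independent part of the instance they carry a product structure and can be enumerated with polynomial delay, exactly as in Remark~\ref{rem:union}. The remaining \emph{boundary} variables of $V_I$ are brute-forced: for each assignment $\bm{b}$ to them, the clauses of $I$ (and the clauses outside $I$ that $\bm{b}$ already satisfies) have determined coordinates, while every still-unsatisfied clause outside $I$ has all of its boundary literals falsified and hence drops at least one literal, becoming a clause of dimension at most $d-1$. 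These reduced clauses form a $(d-1)$-CNF $\Phi_{\bm b}$ on the not-yet-fixed variables; I would recurse on $\Phi_{\bm b}$, prepend the coordinates fixed by $\bm b$ to each signature it returns, and deduplicate across all $\bm b$ (say with a trie on the output strings). The base case $d=1$ is immediate, since unit clauses make the variables act independently.

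For the running time I would follow the exponent of $k$ through the recursion. The decisive refinement I would aim to establish is to brute-force at most $d-1$ variables per clause of $I$ rather than all $d$ of them: one variable of each independent clause is deferred into $\Phi_{\bm b}$, so the number of configurations $\bm b$ is at most $2^{(d-1)|I|}\le k^{\,d-1}$ by the lower bound above. With $e(d)$ denoting the exponent of $k$ for $d$-CNFs, this yields $e(d)=(d-1)+e(d-1)$ and $e(1)=0$, hence $e(d)=(d-1)+(d-2)+\cdots+1=\binom{d}{2}$, while the per-node polynomial work (forming $G_\Phi$, extracting $I$, building the residual instances, and deduplicating) accumulates to the stated factor $O(dm^{2}n)$.

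The hard part will be exactly this $k^{\,d-1}$ per level, i.e. replacing the naive $k^{d}$ by paying for only $d-1$ variables of each independent clause. Intuitively the $|I|$ bits guaranteed by $k\ge 2^{|I|}$ let us cover one variable of each clause of $I$ ``for free'' (its own signature coordinate), leaving $d-1$ variables per clause to enumerate; but making this rigorous requires choosing the deferred variable of each independent clause so that (i) every clause outside $I$ still loses a boundary literal, so the dimension genuinely drops, and (ii) the private and independent parts really do factor, so they are generated with polynomial delay rather than by brute force. Controlling this interaction — and checking that the deduplication never exceeds the output-sensitive budget — is the technical core; the two structural facts that make the whole scheme polynomial in $k$ are the lower bound $k\ge 2^{|I|}$ and the one-literal dimension drop forced by maximality of $I$.
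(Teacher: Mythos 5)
Your skeleton---induction on $d$, the dual graph $G_\Phi$, a maximal independent set $I$, the lower bound $k\ge 2^{|I|}$ obtained by toggling one chosen literal per clause of $I$, brute force over the variables occurring in $I$ to produce residual CNFs of dimension at most $d-1$, then recursion plus deduplication---is exactly the paper's proof. The genuine gap is the refinement that you yourself flag as ``the hard part'': deferring one variable of each clause of $I$ so that only $(d-1)|I|$ variables are brute-forced per level. Your condition (i) cannot always be met. A clause $C'\notin I$ is only guaranteed to meet the variable set of $I$ in a \emph{single} variable, and that variable may be precisely the one you wish to defer for its clause $C\in I$; worse, the $d$ variables of a single $C\in I$ may each be the unique shared variable of some outside clause, so \emph{no} choice of deferred variable for $C$ preserves the dimension drop for all outside clauses. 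Since you do not resolve this (and the private-variable/set-union factorization does not rescue it), the proposal does not establish the stated bound.

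The paper sidesteps the issue by brute-forcing all $n'\le d|I|$ variables of the clauses of $I$, so that every outside clause certainly loses a fixed literal, and accepting the branching factor $2^{n'}\le k_0^{d}$ per level; the resulting recursion is $G(m,n,d,k)\le Km^2nk_0^{d}+k_0^{d}G(m,n,d-1,k)$. You are right to sense that this naive version does not obviously give exponent $\binom{d}{2}$: solved honestly it yields exponent $\binom{d}{2}+d-2$ in $k$, since the step $\binom{d-1}{2}+d\le\binom{d}{2}$ used in the paper's final display is off by one at each level. This does not affect the qualitative conclusion---the total time is still polynomial in the input size and in $k$ for fixed $d$, which is all that the incremental-polynomiality corollary needs---but it means the precise exponent you were working to hit is not delivered by the unrefined recursion either. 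If you drop your refinement and run the paper's version, you obtain a complete and correct proof of total (indeed incremental) polynomial time, at the cost of a slightly larger polynomial in $k$; if you insist on $k^{\binom{d}{2}}$ exactly, you must either repair the deferred-variable argument or find another way to save a factor of $k$ per level.
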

\begin{proof}
We prove the claim by induction on $d$. For $d\leq 2$ the claim follows by Theorem \ref{t-tractable}. 

Assume now that we already proved the claim for all $d'<d$, and let us consider a CNF $\Phi=C_1\wedge C_2\wedge \dots \wedge C_m$ with $\dim(\Phi)=d$. Let us associate to $\Phi$ its dual graph $G_\Phi$ as defined above. Let $S\subseteq V(G_\Phi)$ be a maximal independent set of $G_\Phi$. Such a set can be obtained by a simple greedy procedure in polynomial time in the size of $\Phi$. Note that clauses in $S$ involve pairwise disjoint sets of variables, due to the fact that $S$ is an independent set of $G_\Phi$. Thus, we can choose a literal $u_C\in C$ for each clause $C\in S$, set all other literals in $C$ to zero, set all other variables not occurring in clauses of $S$ to zero, and make all possible truth assignment to the literals $u_C$, $C\in S$. This way we obtain $k_0=2^{|S|}$ different binary signatures of $\Phi$. Note that we can output these $k_0$ signatures with polynomial delay. 

The total number of variables involved in clauses of $S$ is $n'\leq d|S|$. Hence we can assign in all possible ways values to these variables, and produce $2^{n'}$ subproblems $\Phi_j$, $j=1,...,2^{n'}$ in the remaining variables in $O(mn 2^{n'})=O(mn k_0^d)$ time which is polynomial in the input size and $k_0$, since $d$ is a fixed constant. Each of these residual problems is of dimension at most $d-1$. Indeed, each of the clauses not in $S$ shares at least one variable with the clauses of $S$, since $S$ is a maximal independent set of $G_\Phi$, and now that shared variable is fixed at a binary value. 

We apply algorithm $\mathfrak{A}$ to each of the residual sub-CNFs $\Phi_j$, $j=1,...,2^{n'}$, one by one. This way we produce signatures that extend the pattern on $S$ defined by $x^j\in\{0,1\}^{n'}$, for all $j=1,...,2^{n'}$ one by one. We may produce the same signature in this way again and again, but no more than $2^{n'}$ times. Since $2^{n'}=O(k_0^d)$,  we can show that this procedure works in total polynomial time. 

To see this let us introduce some additional notation. We denote by $X_j\subseteq Y=\{0,1\}^{n'}$, $j=1,...,2^{|S|}$ the nonempty sets of (partial) assignments that produce the same signature on the clauses of $S$. For $\bm{x}\in Y$, let us denote by $\Phi(\bm{x})$ the residual CNF, and by $k(\bm{x})$ the number of signatures of $\Phi(\bm{x})$. We denote by $g(\Psi)$ the running time of the above described recursive algorithm on CNF $\Psi$ and let
$G(m,n,d,k)$ be the maxima of $g(\Psi)$ over all CNFs with at most $m$ clauses on $n$ variables having $\dim(\Psi)\leq d$ and having at most $k$ signatures. 

The total computational time in the first phase of the above procedure that ends with producing a list of $2^{n'}$ residual CNFs, each of $\dim \leq d-1$ is bounded by 
\[
O(m^2n)+O(mnk_0)+O(mnk_0^d) \leq Km^2nk_0^d
\]
for a suitable constant $K$ that does not depend on $m$, $n$, and $k_0$. 
The first term on the left hand side is the time to build $G_{\Phi}$ and to find a maximal independent set $S$. The second term is the time we need to generate the $k_0$ initial signatures. The third term is the time to generate the $2^{n'}\leq k_0^d$ subproblems. 

For $\bm{x}\in X_j$ and $\bm{x'}\in X_{j'}$ with $j\neq j'$ the CNFs $\Phi(\bm{x})$ and $\Phi(\bm{x'})$ cannot share signatures, since those must already differ on $S$ by the definition of the sets $X_j$ for $j=1,...,k_0$. However, for $\bm{x},\bm{x'}\in X_j$ CNFs $\Phi(\bm{x})$ and $\Phi(\bm{x'})$ may share (many) signatures. Discounting the one signature we already produced with a given trace on $S$, we can still expect 
\[
k_j ~\geq~\max_{\bm{x}\in X_j} k(\bm{x})-1
\]
different signatures produced by algorithm $\mathfrak{A}$ when we use it for CNFs $\Phi(\bm{x})$, $\bm{x}\in X_j$. Thus, in total we get
\[
k ~=~ k_0 +k_1 + \cdots + k_{2^{|S|}}
\]
different signatures for $\Phi$. The total running time on CNFs $\Phi(\bm{x})$, $\bm{x}\in X_j$ can be bounded by 
\[
\sum_{\bm{x}\in X_j} g(\Phi(\bm{x})) ~\leq~ |X_j| G(m,n,d-1,k_j).
\]
Thus, for the total running time of algorithm $\mathfrak{A}$ on $\Phi$ we get
\begin{align*}
g(\Phi) \leq G(m,n,d,k) &\leq~ Km^2nk_0^d + \sum_{j=1}^{k_0} |X_j|G(m,n,d-1,k_j)\\
 &\leq ~ Km^2nk_0^d + k_0^dG(m,n,d-1,k),
\end{align*}
where for the last inequality we used $k_j\leq k$ for all $j=1,...,k_0$, implying $G(m,n,d-1,k_j)\leq G(m,n,d-1,k)$, which allows this quantity to be factored out of the sum, that can be then upper bounded by $\sum_{j=1}^{k_0}|X_j|= 2^{n'}\leq k_0^d$. Using this we can show by induction on $d$ that 
\[G(m,n,d,k) \leq Ldm^2nk^{\binom{d}{2}} \] 
for some constant $L$ (we will choose $L\geq K$) which will complete the proof of our claim. Now
\begin{align*}
G(m,n,d,k) &\leq ~ Km^2nk_0^d + k_0^dG(m,n,d-1,k)\\
&\leq ~ Km^2nk_0^d + k_0^d L(d-1)m^2nk^{\binom{d-1}{2}}\\
&\leq ~ Lm^2nk^d + k^d L(d-1)m^2nk^{\binom{d-1}{2}}\\
&\leq ~ Lm^2nk^d + L(d-1)m^2nk^{\binom{d-1}{2}+d} \leq Ldm^2nk^{\binom{d}{2}}.
\end{align*}
\end{proof}

\begin{corollary}
The algorithm $\mathfrak{A}$ constructed in the above proof in fact works in incremental polynomial time.
\end{corollary}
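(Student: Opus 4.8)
The plan is to revisit the induction in the proof of Theorem~\ref{t-boundeddim}, but to track the number $t$ of \emph{distinct} signatures output so far rather than the total count $k$, and to show that the cumulative running time of $\mathfrak{A}$ up to the moment it has emitted $t$ distinct signatures is bounded by a polynomial in the input size and $t$ --- which is exactly incremental polynomiality. The base case $d\le 2$ is immediate: there Theorem~\ref{t-tractable} gives polynomial delay, and polynomial delay trivially implies incremental polynomial time. For the inductive step I would re-read the schedule of $\mathfrak{A}$: it first emits the $k_0=2^{|S|}$ signatures determined by the maximal independent set $S$ of $G_\Phi$ with polynomial delay, then builds the $2^{n'}\le k_0^d$ residual CNFs $\Phi(\bm{x})$, each of dimension $\le d-1$, and finally runs $\mathfrak{A}$ recursively on them one by one, discarding any signature already output.

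Two elementary observations drive the bound. First, because all $k_0$ prefix signatures are emitted before any recursion begins, once we are inside the recursive phase we already have $t\ge k_0$, so the number of subproblems $2^{n'}\le k_0^d\le t^d$ and the cost $O(mn\,k_0^d)$ of creating them are both polynomial in the input and in the current count $t$. Second --- and this is the point that makes duplication harmless --- within a single subproblem $\Phi(\bm{x})$ distinct signatures of $\Phi(\bm{x})$ correspond to distinct signatures of $\Phi$, so running $\Phi(\bm{x})$ to completion raises the global count by at least $k(\bm{x})-1$; hence every subproblem we finish satisfies $k(\bm{x})\le t+1$ at the moment we stop, and the at most one subproblem still in progress has emitted at most $2t$ of its own signatures (each such signature is either new, of which there are $\le t$, or a duplicate of one of the $\le t$ previously output signatures).

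Putting these together, by the time $\mathfrak{A}$ has output $t$ distinct signatures it has started at most $t^d$ subproblems; each completed subproblem has $k(\bm{x})\le t+1$ and hence, by the bound of Theorem~\ref{t-boundeddim}, costs at most $G(m,n,d-1,t+1)=O(m^2n\,(t+1)^{\binom{d-1}{2}})$, while the single subproblem in progress costs, by the inductive incremental guarantee at dimension $d-1$ applied to its $\le 2t$ emitted signatures, a further amount that is $\poly(m,n,t)$. Thus the recursive phase contributes at most $t^d\cdot G(m,n,d-1,t+1)=O(m^2n\,t^{\binom{d}{2}+1})$ up to lower-order terms, and adding the $O(m^2n)$ preprocessing together with the $O(mn\,t^d)$ spent emitting prefixes and building subproblems, the cumulative time to produce the first $t$ signatures is $O(d\,m^2n\,t^{\binom{d}{2}+1})$, polynomial in the input size and $t$. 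I expect the main obstacle to be precisely the second observation: since total polynomial time does not in general imply incremental polynomial time, one must rule out the possibility that $\mathfrak{A}$ squanders a super-polynomial-in-$t$ amount of effort churning through subproblems that only reproduce already-known signatures, and the argument above closes this gap by showing that any completed subproblem forces the global count up to within one of its own signature count, so it can contribute at most $k(\bm{x})\le t+1$ signatures and therefore cannot be a time-sink disproportionate to $t$.
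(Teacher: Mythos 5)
Your overall strategy coincides with the paper's: induct on $d$, note that the $k_0=2^{|S|}$ prefix signatures come with polynomial delay, and then argue that the recursive phase cannot spend more than a polynomial (in the input and the current count $t$) amount of time on subproblems that only reproduce known signatures. The one step that does not survive scrutiny is your justification of the key claim: it is \emph{not} true that ``running $\Phi(\bm{x})$ to completion raises the global count by at least $k(\bm{x})-1$.'' Two partial assignments $\bm{x},\bm{x'}$ in the same class $X_j$ induce the same trace on $S$, and the residual CNFs $\Phi(\bm{x})$ and $\Phi(\bm{x'})$ may have identical (or heavily overlapping) signature sets once extended to signatures of $\Phi$; the paper points this out explicitly (they ``may share (many) signatures,'' and processing all but the first member of $X_j$ ``may not produce any new signatures''). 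So in the worst case a completed subproblem raises the global count by exactly zero.

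Fortunately the conclusion you extract from that claim --- that every completed subproblem satisfies $k(\bm{x})\le t$ at the moment it finishes --- is still true, by a slightly different argument: the $k(\bm{x})$ signatures of $\Phi(\bm{x})$ extend injectively to $k(\bm{x})$ distinct signatures of $\Phi$, and each of these has been \emph{output at some point} (either freshly during this subproblem, or earlier when it was first encountered and the duplicate check fired) by the time $\Phi(\bm{x})$ is exhausted; hence the number $t$ of distinct signatures emitted so far is at least $k(\bm{x})$, regardless of how many of them this particular subproblem contributed. With that repair your accounting goes through: at most $t^d$ subproblems started, each completed one costing at most $G(m,n,d-1,t)$ by Theorem~\ref{t-boundeddim}, and the single one in progress controlled by the inductive incremental guarantee at dimension $d-1$. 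This is essentially the paper's own argument, phrased as a bound on the cumulative time up to $t$ outputs rather than on the largest gap between consecutive outputs; the two formulations are interchangeable here.
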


\begin{proof}
Using the above theorem, we can prove this claim by induction on the dimension $d$. When $d=1$, the claim is trivially true. 

Consider now the general case, as in the proof of the above theorem. As we remarked there, producing the first $k_0=2^{|S|}$ signatures in fact can be done with polynomial delay. After this we start processing the CNFs $\Phi(\bm{x})$ for $\bm{x}\in X_j$, $j=1,...,k_0$. Note that the signatures produced from $\Phi(\bm{x})$, $\bm{x}\in X_j$ and $\Phi(\bm{x'})$, $\bm{x'}\in X_{j'}$ are all different if $j\neq j'$. Note also that $\dim(\Phi(\bm{x}))\leq d-1$ for all $\bm{x}\in X_j$, $j=1,...,k_0$, and thus we can assume by induction that their signatures can be produced in incremental polynomial time in the size of $\Phi(\bm{x})$, which is bounded by the size of $\Phi$. Thus, if $X_j=\{\bm{x_1},...,\bm{x_\ell}\}$, then we can produce $k(\bm{x_1})$ new signatures in incremental polynomial time, in fact regardless how many we produced previously (including the $k_0$ we have from the first phase.) Let us denote by $q(m,n,k(\bm{x_1}))$ the polynomial bounding the total time processing $\Phi(\bm{x_1})$. If $k(\bm{x_2})>k(\bm{x_1})$, then maybe the first $k(\bm{x_1})$ signatures produced from $\Phi(\bm{x_2})$ coincide with the ones we already generated from $\Phi(\bm{x_1})$, but still after at most $q(m,n,k(\bm{x_1}))$ time we get a new signature. In the worst case, we have $k_j=k(\bm{x_1})\geq k(\bm{x_i})$ for all $\bm{x_i}\in X_j$, $i\neq 1$, in which case processing $\Phi(\bm{x_i})$, $i=2,...,\ell$ may not produce any new signatures. Since $\ell\leq k_0^d$, this means that the largest gap between the output of the last signature of $\Phi(\bm{x_1})$ and next new signature is not more than $k_0^d q(m,n,k(\bm{x_1}))$, at a moment when we have already produced $k'\geq k_0+k(\bm{x_1})$ signatures. Thus this largest time gap between two outputs is still bounded by a polynomial of the input size $O(mn)$ and the number of signatures $k'\geq k_0+k(\bm{\bm{x_1}})$ produced so far.
\end{proof}

\section{Generating maximal and minimal signatures}
\label{sec:minmax}

Generation of maximal signatures is difficult as it includes SAT as a special case.

\begin{theorem}
Generating all maximal signatures is NP-hard.
\end{theorem}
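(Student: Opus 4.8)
The plan is to reduce SAT to the generation of maximal signatures, exploiting the fact that the all-ones signature, \emph{whenever it is attainable}, is the unique maximal signature. Given an arbitrary CNF $\Phi = C_1 \wedge \cdots \wedge C_m$, the starting observation is that $(1,\ldots,1)$ is a signature of $\Phi$ if and only if some assignment $\bm{a}$ satisfies every clause, that is, if and only if $\Phi$ is satisfiable. So SAT is literally a membership question about the signature family, and the work is to promote this into hardness of \emph{enumerating} the maximal ones.

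First I would record the governing structural dichotomy. If $\Phi$ is satisfiable, then $(1,\ldots,1)$ is attainable; since its set of $1$-clauses is all of $\{1,\dots,m\}$, it dominates every other signature, and hence it is the \emph{unique} maximal signature of $\Phi$. If $\Phi$ is unsatisfiable, then $(1,\ldots,1)$ is not attainable, every maximal signature has at least one zero coordinate, and there may be exponentially many of them. Thus $\Phi$ is satisfiable if and only if $\Phi$ has exactly one maximal signature and that signature equals $(1,\ldots,1)$.

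Next I would turn this dichotomy into a polynomial-time decision procedure built on a hypothetical total-polynomial generation algorithm $\mathcal{G}$, i.e. one that lists all $N$ maximal signatures in time at most $p(\text{size of }\Phi,\,N)$ for some fixed polynomial $p$. Run $\mathcal{G}$ on $\Phi$ for at most $p(\text{size of }\Phi,\,1)$ steps. If $\mathcal{G}$ terminates within this budget, it has produced every maximal signature, and I simply test whether $(1,\ldots,1)$ appears among the outputs, declaring $\Phi$ satisfiable precisely in that case. If $\mathcal{G}$ has \emph{not} terminated after $p(\text{size of }\Phi,\,1)$ steps, then necessarily $N\geq 2$ (a single maximal signature would force termination within that bound), so by the dichotomy $(1,\ldots,1)$ is not the unique maximal signature and hence is not attainable, i.e. $\Phi$ is unsatisfiable. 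Either branch runs in time polynomial in the size of $\Phi$ and decides satisfiability correctly, so the existence of $\mathcal{G}$ would yield $\mathrm{P}=\mathrm{NP}$.

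The crux, and the step I expect to require the most care, is exactly the structural collapse in the satisfiable case: satisfiability contracts the whole family of maximal signatures to the single point $(1,\ldots,1)$, and it is this collapse that lets the fixed time budget $p(\text{size of }\Phi,\,1)$ separate the satisfiable from the unsatisfiable case regardless of how many maximal signatures the latter generates. I would also take care to phrase the conclusion against the right notion: the argument rules out a total-polynomial generation algorithm (and a fortiori any incremental-polynomial or polynomial-delay algorithm) unless $\mathrm{P}=\mathrm{NP}$. Finally, it is worth emphasizing why the hardness is genuine rather than an artifact of finding one solution: producing a \emph{single} maximal signature is trivial (start from any assignment, say all zeros, and greedily satisfy further clauses until no more can be added), so the difficulty stems entirely from the completeness of the enumeration, precisely as the reduction reflects.
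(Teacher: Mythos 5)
Your argument is correct and is essentially the paper's proof: the paper likewise observes that the all-ones vector is the unique maximal signature if and only if $\Phi$ is satisfiable, and concludes that any total polynomial time generator would decide SAT; you have merely spelled out the standard time-budget simulation that the paper leaves implicit. One caveat: your closing remark that producing a \emph{single} maximal signature is trivial by greedy improvement is false --- deciding whether one more clause can be added to the currently satisfied set is itself a SAT instance, and indeed by your own dichotomy any polynomial-time procedure returning \emph{some} maximal signature would decide satisfiability (check whether it returns the all-ones vector), so even finding one maximal signature is NP-hard; this side remark should be deleted, though it does not affect the validity of the main reduction.
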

\begin{proof}
Let us consider a CNF $\Phi$, and observe that its unique maximal signature is the all-one vector if and only if $\Phi$ is satisfiable. Hence any total polynomial time algorithm generating the maximal signatures would detect satisfiability of $\Phi$. As SAT is difficult in general \cite{cook1971complexity}, the theorem follows.
\end{proof}

It turns out that minimal signatures can be generated efficiently.

\begin{theorem}
Minimal signatures can be generated with polynomial delay.
\end{theorem}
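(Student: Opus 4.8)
The plan is to generate minimal signatures via a flashlight search directly in the clause space, where the key insight is that \emph{minimality} of a signature can be tested efficiently even though maximality cannot. Recall a signature is minimal if the set of clauses taking value $1$ is inclusionwise minimal. First I would reformulate the notion: a signature $\sigma_\Phi(\bm{a})$ is minimal exactly when, for the assignment $\bm{a}$ realizing it, no clause evaluating to $1$ can be ``turned off'' by some other assignment while keeping all currently-zero clauses at zero. The crucial observation is that turning a clause $C_i$ from $1$ to $0$ forces \emph{all} of its literals to $0$, i.e. it fixes the variables of $C_i$ at definite values; this is a completely determined partial assignment, not a search. Thus minimality is a local, easily checkable condition rather than an existential one.

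Concretely, I would build the search tree whose root-to-node paths fix the values $(z_1,\dots,z_k)\in\{0,1\}^k$ of the initial clauses $C_1,\dots,C_k$. At each node I want to prune whenever the current prefix cannot be extended to a \emph{minimal} signature. The feasibility of extending the prefix to \emph{some} signature is a SAT question: the clauses set to $1$ must be satisfiable after forcing all literals of the clauses set to $0$ to be $0$. To additionally guarantee minimality, I would maintain the invariant that every clause already committed to $1$ is genuinely forced to be $1$ given the zeros committed so far. The main step is: before outputting a candidate signature (a leaf where all $m$ clauses are decided and the corresponding CNF is satisfiable), verify minimality by checking, for each clause $C_i$ currently at value $1$, whether setting all literals of $C_i$ to $0$ is still consistent with keeping every zero-clause at zero and every other one-clause's status unchanged — equivalently, whether there is an assignment in which $C_i=0$ and all the other currently-zero clauses remain $0$. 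If for some such $C_i$ this is satisfiable, the signature is not minimal (a strictly smaller one exists below it) and we discard it; if no such $C_i$ admits this, the signature is minimal.

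The point that makes polynomial delay achievable, rather than merely total polynomial time, is that minimality checking reduces to at most $m$ ordinary SAT calls, but on formulas that are \emph{tractable by construction}: fixing all literals of a clause $C_i$ to $0$ and asking whether the remaining zero-clauses can stay at zero is itself a problem of forcing a set of clauses to evaluate to $0$, which amounts to a conjunction of unit-type constraints (each zero-clause must have all its literals false) — a $2$-SAT / pure Horn-like system solvable in polynomial time without invoking a general SAT oracle. I would organize the tree traversal so that pruning keeps every explored node on a path leading to at least one minimal signature, using the backtracking-with-guaranteed-progress structure of Theorem~\ref{t-tractable}: whenever we descend, we maintain that the committed zeros are simultaneously falsifiable, and we preferentially push zeros (set clauses to $0$) as deep as possible, so that leaves reached are automatically minimal with respect to the clauses already fixed.

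The hard part will be showing that the pruning is both \emph{sound} (never discards a subtree containing an as-yet-unlisted minimal signature) and \emph{complete in bounded time} (between consecutive outputs we do only polynomially many node expansions and polynomial-time feasibility checks). The delicate point is the interaction between the flashlight prefix constraints and the global minimality condition: a prefix might be extendable to a signature yet not to any \emph{minimal} one, so the pruning predicate must test ``is there a minimal signature consistent with this prefix'' rather than merely ``is there a signature.'' I expect to handle this by proving that the greedy policy of maximizing the set of clauses forced to $0$ (subject to the prefix) always lands in a minimal signature, so that feasibility of the prefix as a signature-prefix already certifies the existence of a compatible minimal signature; establishing this monotone closure property, and that it is computable in polynomial time because falsifying a collection of bounded- or unbounded-width clauses is an easy constraint-propagation problem, is the crux of the argument.
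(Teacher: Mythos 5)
Your leaf-level minimality test is sound: a signature with zero-set $S$ is minimal exactly when every clause taking value $1$ contains the complement of some literal occurring in a clause of $S$, and a single-clause check suffices (if some larger zero-set were realizable, already one extra clause could be switched off on its own). The genuine gap is in the step you yourself identify as the crux. The claimed lemma --- that signature-feasibility of a prefix already certifies the existence of a compatible \emph{minimal} signature, because the greedy policy of pushing clauses to $0$ lands in a minimal one --- is false. Take $\Phi=C_1\wedge C_2$ with $C_1=(x_1)$ and $C_2=(x_2)$. The only minimal signature is $(0,0)$, yet the prefix $C_1=1$ extends to the signatures $(1,0)$ and $(1,1)$, and the greedy zero-maximizing completion of that prefix yields $(1,0)$, which is not minimal. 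In general, any clause that conflicts with no other clause must take value $0$ in \emph{every} minimal signature, so a prefix assigning it $1$ is signature-feasible but minimal-signature-infeasible; your pruning rule therefore admits subtrees containing no minimal signatures, and padding this example with many pairwise variable-disjoint unit clauses makes the number of wasted leaves between consecutive outputs exponential. Without a correct ``is this prefix extendable to a minimal signature'' test, no delay bound follows.

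The paper avoids the flashlight altogether. It proves a bijection between minimal signatures and maximal independent sets of the conflict graph $H_\Phi$: the zero-set of any signature is an independent set of $H_\Phi$, every maximal independent set is realized as a zero-set by setting all its literals to $0$, and minimality of the signature is exactly maximality of the independent set (this is the global form of your per-clause test). It then invokes the known polynomial-delay algorithms for enumerating maximal independent sets. That bijection is the missing idea in your argument; if you insist on a clause-space search, the pruning predicate you would need is precisely an extension test for maximal independent sets of $H_\Phi$, at which point you are re-deriving that enumeration machinery rather than appealing to Theorem~\ref{t-tractable}.
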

\begin{proof}
We claim that there is a one-to-one correspondance between minimal signatures of a CNF $\Phi$ and maximal independent sets of its conflict graph $H_\Phi$. Since $H_\Phi$ can be built in polynomial time from $\Phi$ and maximal independent sets of a graph can be generated with polynomial delay \cite{tsukiyama1977new,makino2004new}, this would prove the theorem.

To see the above claim, assume first that a signature $\sigma=\{\sigma_C\mid C\in\Phi\}$ is a minimal signature of $\Phi$. Note that the set $S=\{C\in\Phi\mid \sigma_C=0\}$ is an independent set in $H_\Phi$. For any $C\in\Phi$ with $\sigma_C=1$ there must exist a conflict between $C$ and some $C'\in S$, since otherwise we could set $\sigma_C$ to zero without forcing any of the clauses in $S$ to change their values, contradicting the minimality of $\sigma$. Thus $S$ must be a maximal independent set.

The other direction follows from the fact that if $S$ is a maximal independent set of $H_\Phi$ and we set all the clauses in $S$ to zero, then all other clauses of $\Phi$ are forced to take value one due to the conflicts between $S$ and other vertices of $H_\Phi$.
\end{proof}

\section{Conclusions}
\label{sec:conc}

In this paper we show that all signatures of a given CNF with a bounded dimension can be generated in incremental polynomial time, answering an open problem posed by Kr\"oll ~\cite[Problem 4.7]{boros2019enumeration}. A faster incremental polynomial algorithm is provided for the class of formulas where both the dimension and the co-occurrence are bounded. Moreover, it is also shown that the same task can be done with polynomial delay if the input CNF is from a tractable class (in this case no bound on dimension or co-occurrence is necessary).  Finally, it is proved that generating maximal signatures is NP-hard, while minimal signatures can be generated with polynomial delay.

In this context it is interesting to note that given a 3-CNF $\Phi$ with $m$ clauses and the vector $\bm{y}=(1,1,...,1)\in \{0,1\}^m$ it is NP-hard to test whether $\bm{y}$ is a signature of $\Phi$, or not ($\bm{y}$ is a signature if only if $\Phi$ is satisfiable). On the other hand, our results show that generating all signatures of $\Phi$ can be done in incremental polynomial time. This is a rather unusual behavior for a generation problem. Typically, if all solutions of a given problem can be generated in incremental polynomial time, checking if a given candidate is a solution or not is computationally easy.

An additional problem connected to CNF signatures was stated at the Dagstuhl Seminar 19211 by Gy. Tur\'an. Given a set $S \subseteq  \{0,1\}^m$, does there exist a CNF with $m$ clauses such that $S$ is exactly its set of all signatures? If yes, can such a CNF be computed efficiently? This `reverse' problem (get the signatures, output clauses) to the problem presented in this paper (get the clauses, output signatures) is to the best of our knowledge completely open.

\medskip
\paragraph{Acknowledgements} Krist\'of B\'erczi was supported by the J\'anos Bolyai Research Fellowship of the Hungarian Academy of Sciences and by the ÚNKP-19-4 New National Excellence Program of the Ministry for Innovation and Technology. Ond\v{r}ej \v{C}epek and Petr Ku\v{c}era gratefully acknowledge a support by the Czech Science Foundation (Grant 19-19463S). Projects no. NKFI-128673 and no. ED\_18-1-2019-0030 (Application-specific highly reliable IT solutions) have been implemented with the support provided from the National Research, Development and Innovation Fund of Hungary, financed under the FK\_18 and the Thematic Excellence Programme funding schemes, respectively. This work was supported by the Research Institute for Mathematical  Sciences, an International Joint Usage/Research Center located in Kyoto University. 

\bibliographystyle{abbrv}
\bibliography{enumeration}

\end{document}